\newtheorem{theo}{Theorem}[section]
\newtheorem{defi}[theo]{Definition}
\newtheorem{lem}[theo]{Lemma}
\newtheorem{prop}[theo]{Proposition}
\newtheorem{rem}[theo]{Remark}
\newcommand{\R}{\mathbb{R}}
\newcommand{\Z}{\mathbb{Z}}
\newcommand{\D}{\mathcal{D}}
\newcommand{\F}{\mathcal{F}}
\newcommand{\AV}{\mathcal{J}}
\newcommand{\An}{\textnormal{Angv}}
\newcommand{\Corr}{\mathcal{C}}
\newcommand{\m}{\textnormal{min}}
\newcommand{\M}{\textnormal{max}}
\title{How Uncertainty bounds the shape index of simple cells}
\author{D. Barbieri, G. Citti, A. Sarti}
\begin{document}

\maketitle

\begin{abstract}
We propose a theoretical motivation to quantify actual physiological features, such as the shape index distributions measured by Jones and Palmer in cats and by Ringach in macaque monkeys. We will adopt the Uncertainty Principle associated to the task of detection of position and orientation as the main tool to provide quantitative bounds on the family of simple cells concretely implemented in primary visual cortex.
\footnote{
\thanks{The research of the first author was supported by Grant DIM2011 - R\'egion \^{I}le de France.}

\textit{2010 Mathematics Subject Classification}: 62P10, 43A32, 81R15

{\textit{Keywords}: Visual Cortex, Uncertainty Principle, Lie Groups, Receptive Profiles }}

\end{abstract}


\section{Introduction}

One of the fundamental tasks performed by simple cells in primary visual cortex is that of detecting position and local orientation of a stimulus \cite{HWFerrier}. On the other hand, the functional behavior of simple cells as visual detectors is characterized in terms of standard linear filtering and with other so-called non classical behaviors \cite{G}. We will concentrate on linear aspects, and consider classical receptive profiles modeled with a planar oscillation under a spatially localizing window. In \cite{Ringach2002}, such receptive profiles were studied in terms of two dimensionless indexes of shape $(n_x,n_y)$ corresponding to the product of the frequency of the oscillation and the sizes of the window in the direction of the oscillation and in the orthogonal one, showing that the distribution of such feature on V1 simple cells of macaque monkeys is confined to a specific region. This result is summarized in Figure \ref{fig:Ring}. Remarkably, the same confinement was found also in cats \cite{JP1, JP2}, and this suggests that this pattern can be associated with some criteria of optimality with respect to perceptive tasks. Notable proposals of such criteria were stated in terms of sparse coding in \cite{OlshausenField}, already discussed in \cite{Ringach2002}, and more recently in \cite{SSBSL}, or in terms of Bayesian learning \cite{H}.

In this paper we will focus on the task of position and orientation detection, and propose theoretical motivations based on the Uncertainty Principle for the corresponding geometry to explain such confinement. In general, the Uncertainty Principle is indeed a tool that gives informations on the possible localization of functions with respect to competing symmetries, that in this case are those of the well known group of translations and rotations of the Euclidean plane. The role of symmetries in the mechanisms of visual perception in V1 is a well recognized point \cite{CS, CF, KW}, as well as the Uncertainty Principle was already invoked to explain relevant cortical morphologies \cite{Daugman1985, BCSS}. Here we will use such concepts to characterize the resolution that can be obtained with joint spatial and angular measurements, based on the localization properties of receptive profiles. In terms of such characterizations we will deduce the bounds observed in Figure \ref{fig:Ring} as the result of intrinsic notions of balance between joint measurements resolutions.

\begin{figure}
\centering
\includegraphics[width=.6\textwidth]{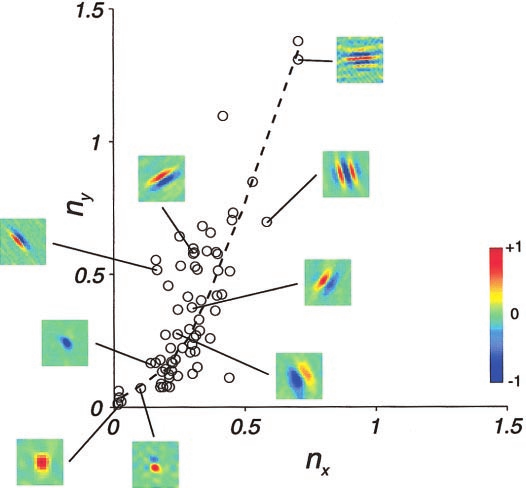}
\caption{Distribution of receptive profiles in terms of their shapes. Figure extracted from \cite{Ringach2002}.}\label{fig:Ring}
\end{figure}

\section{Receptive profiles and relevant symmetries}

We will assume isotropic gaussian Gabor filters as a model for standard V1 simple cells classical receptive profiles, defined on the Euclidean image plane:
\begin{equation}\label{eq:RP}
\psi_{q,p}^\sigma(x) = \frac{1}{\sigma \sqrt{\pi}} e^{ip\cdot (x-q)} e^{-\frac{|x-q|^2}{2\sigma^2}} \, , \qquad x \in \R^2
\end{equation}
with parameters $q \in \R^2$, $p \in \R^2$, $\sigma \in \R^+$. Each V1 simple cell is assumed to perform a linear filtering with a function shaped as in (\ref{eq:RP}), so that it can be characterized by these parameters. Their mapping on the two dimensional cortical layers are referred to as cortical maps \cite{BWF}. In particular, the centers $q$ of receptive fields are in a so-called retinotopic correspondence on the cortex \cite{HWFerrier}, while the size $\sigma$ is in average larger at the periphery and smaller close to the fovea \cite{HW}.  The frequency parameters $p$ are generally considered in polar coordinates $p = |p|(\cos\theta,\sin\theta)$, where $|p|$ is called spatial frequency and the angle $\theta$ up to a factor of $\pi$ is called preferred orientation, and their cortical maps are also well studied \cite{Ohki, KW, BCSS}.


The family of functions (\ref{eq:RP}) were proposed in \cite{Daugman1985} due to their optimal localization in space and frequency with respect to the classical Heisenberg Uncertainty Principle, and their fitness to model the linear behavior of simple cells was thoroughly tested \cite{Ringach2002}. We note however that here we are dealing with a simplified model of isotropic receptive fields, since as we will see this provides enough information for the present study, with the advantage that the results can be stated in a clearer form. We also recall that the real and imaginary parts in (\ref{eq:RP}) correspond to so-called even and odd cells
$$
\psi^\sigma_{q,p}(x) = \frac{1}{\sigma \sqrt{\pi}} \cos(p\cdot (x-q))e^{-\frac{|x-q|^2}{2\sigma^2}} + i \frac{1}{\sigma \sqrt{\pi}}\sin(p\cdot (x-q))e^{-\frac{|x-q|^2}{2\sigma^2}}
$$
but it will be sufficient for our purposes to deal with the full complex function as a whole.

\subsection{Groups of transformations}

Let us introduce the following unitary operators on $L^2(\R^2)$
\begin{itemize}
\item[i.] translations: $T_q f(x) = f(x - q)$, $q \in \R^2$
\item[ii.] modulations: $M_p f(x) = e^{ipx}f(x)$, $p \in \R^2$
\item[iii.] dilations: $\Sigma_\sigma f(x) = \frac{1}{\sigma}f(\frac{x}{\sigma})$, $\sigma \in \R^+$
\item[iv.] rotations: $R_\alpha f(x) = f(r_{-\alpha}x)$, $\alpha \in S^2$
\end{itemize}
where $r_{-\alpha}$ stands for the usual counterclockwise rotation of an angle $\theta$ on the Euclidean plane. In particular we note that rotations commute with dilations, and it is easy to see that
\begin{equation}\label{eq:RotMod}
R_\alpha M_p = M_{r_\alpha p} R_\alpha .
\end{equation}

If we denote with $g_1$ a $L^2(\R^2)$ normalized isotropic Gaussian with unit standard deviation
$$
g_1(x) = \frac{1}{\sqrt{\pi}} e^{-\frac{|x|^2}{2}}
$$
then we can characterize the functions (\ref{eq:RP}) in terms of the operators $i.$, $ii.$ and $iii.$ as
$$
\psi_{q,p}^\sigma(x) = T_q M_p \Sigma_\sigma g_1 (x) .
$$
Such a family is the prototype of a so-called wave packet systems \cite{CF1978}, and much is known about these structures \cite{KT, LWW}.

In this work we will deal with the localization properties of (\ref{eq:RP}) with respect to translations and local rotations, i.e. making use of the symmetries $i.$ and $iv.$, since they constitute two fundamental symmetries related to the mechanisms of visual perception in V1 (see e.g. \cite{CS} and references therein).

Local rotations are defined by
\begin{equation}\label{eq:localrotations}
R^q_\alpha f(x) \doteq f\left((r^q_\alpha)^{-1} x\right) = T_{q} R_{\alpha} T_{-q} f(x)
\end{equation}
where $r^q_\alpha x = r_\alpha(x - q) + q$ is a rotation of the Euclidean plane around point $q$, and with respect to these transformations we have the following.
\begin{lem}\label{lem:rotationofRP}
Let $\psi^\sigma_{q,p}$ be as in (\ref{eq:RP}). Then
\begin{equation}\label{eq:rotationofRP}
R^q_\alpha \psi^\sigma_{q,p}(x) = \psi^\sigma_{q,r_\alpha p}(x)
\end{equation}
\end{lem}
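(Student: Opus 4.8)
The plan is to exploit the operator factorization already recorded in the excerpt rather than to manipulate the Gaussian by hand. Recall that $\psi^\sigma_{q,p} = T_q M_p \Sigma_\sigma g_1$ and that, by (\ref{eq:localrotations}), $R^q_\alpha = T_q R_\alpha T_{-q}$. Composing these and using $T_{-q}T_q = \mathrm{Id}$ immediately gives
$$
R^q_\alpha \psi^\sigma_{q,p} = T_q R_\alpha T_{-q} T_q M_p \Sigma_\sigma g_1 = T_q R_\alpha M_p \Sigma_\sigma g_1 ,
$$
so the whole statement reduces to understanding how $R_\alpha$ passes through $M_p \Sigma_\sigma$ when applied to $g_1$.

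The second step moves $R_\alpha$ to the right. First apply the intertwining relation (\ref{eq:RotMod}), $R_\alpha M_p = M_{r_\alpha p} R_\alpha$, which turns the modulation parameter into $r_\alpha p$ — precisely the change we want. Then use the fact noted just before (\ref{eq:RotMod}) that rotations commute with dilations, $R_\alpha \Sigma_\sigma = \Sigma_\sigma R_\alpha$, to bring $R_\alpha$ next to $g_1$; and since $g_1$ is isotropic, $R_\alpha g_1 = g_1$. Collecting the terms yields $T_q M_{r_\alpha p}\Sigma_\sigma g_1 = \psi^\sigma_{q, r_\alpha p}$, which is (\ref{eq:rotationofRP}).

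Alternatively, one can verify the identity by direct substitution. Since $r^q_\alpha x = r_\alpha(x-q)+q$, its inverse is $(r^q_\alpha)^{-1}x = r_{-\alpha}(x-q)+q$, so $R^q_\alpha \psi^\sigma_{q,p}(x) = \psi^\sigma_{q,p}\big(r_{-\alpha}(x-q)+q\big)$. Writing $z = r_{-\alpha}(x-q)+q$ one has $z-q = r_{-\alpha}(x-q)$, hence $|z-q|^2 = |x-q|^2$ because $r_{-\alpha}$ is an isometry, and $p\cdot(z-q) = p\cdot r_{-\alpha}(x-q) = (r_\alpha p)\cdot(x-q)$ because rotation matrices are orthogonal, with $r_{-\alpha}^{\mathsf{T}} = r_\alpha$. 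Substituting into (\ref{eq:RP}) and noting that the normalizing constant $1/(\sigma\sqrt\pi)$ is unaffected gives exactly $\psi^\sigma_{q, r_\alpha p}(x)$.

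There is no genuine obstacle here; the only point demanding a little care is the bookkeeping in the exponent of the oscillatory factor, namely that a rotation acting on the spatial variable is transferred to the frequency variable through the transpose — in the operator language this is the content of (\ref{eq:RotMod}), and in the direct computation it is the orthogonality identity $r_{-\alpha}^{\mathsf{T}} = r_\alpha$. I expect to present the operator-based argument as the main proof, since it makes transparent why the result holds and reuses the relations already established, and to mention the direct substitution only as a remark.
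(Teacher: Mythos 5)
Your main argument is exactly the paper's proof: factor $\psi^\sigma_{q,p}=T_qM_p\Sigma_\sigma g_1$, cancel $T_{-q}T_q$, apply (\ref{eq:RotMod}), and finish using that rotations commute with dilations and that $g_1$ is isotropic. The additional direct-substitution verification is correct but not in the paper; it is a harmless supplement.
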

\begin{proof}
Using (\ref{eq:RotMod}) and the definition of local rotations (\ref{eq:localrotations}) we get
$$
R^q_\alpha \psi^\sigma_{q,p}(x) = T_{q} R_\alpha T_{-q} T_q M_p \Sigma_\sigma g_1(x) = T_{q} R_{\alpha} M_p \Sigma_\sigma g_1(x) = T_{q} M_{r_\theta p} R_{\alpha} \Sigma_\sigma g_1(x)
$$
so (\ref{eq:rotationofRP}) follows since rotations commute with dilations and $g_1$ is isotropic, i.e. $R_\alpha g_1(x) = g_1(x)$.
\end{proof}

Actually, the fact that $g_1$ is isotropic allows to write the whole family (\ref{eq:RP}) in terms of all the operators $i.$ to $iv.$ Indeed, denoting with $\theta$ the polar angle of $p$, that means $p = |p|(\cos\theta,\sin\theta)$, we can write (\ref{eq:RP}) as
$$
\psi_{q,p}^\sigma(x) = T_q R_\theta M_{\binom{|p|}{0}} \Sigma_\sigma g_1 (x)
$$
where $M_{\binom{|p|}{0}}f(x) = e^{i |p|x_1}f(x)$, so another way to characterize the system of functions (\ref{eq:RP}) is to consider a family $\{g_{\sigma,|p|} = M_{\binom{|p|}{0}} \Sigma_\sigma g_1 (x) \, , \ |p|,\sigma \in \R^+\}$ and rotate and translate each of its members. The aim of next section is actually to deduce properties on the localization of $\psi_{q,p}^\sigma$ with respect to the parameters $q$ and $\theta$, expressed in terms of the parameters $|p|$ and $\sigma$.

\section{Measures of Uncertainty}\label{sec:uncertainty}

In this section we characterize the uncertainty associated to joint measurements of positions and local orientations in terms of the properties of the measurement devices, expressed by $L^2$ functions, and quantify such uncertainties for the case of receptive profiles.

We recall that the generators $P_j$ of translations along cartesian axis are given by partial derivatives
\begin{equation}\label{eq:Mom}
P_j \doteq \frac{d}{dq_j}\Big|_{q = 0}T_q = \partial_{x_j}\ ,\quad j=1,2
\end{equation}
while the generator $\AV^q$ of a rotation around point $q$ can be written in terms of the ordinary infinitesimal rotation operator $\AV = x_2\partial_{x_1} - x_1\partial_{x_2}$
\begin{equation}\label{eq:AngMom}
\AV^q \doteq \frac{d}{d\alpha}\Big|_{\alpha = 0}R^q_\theta = T_{q} \AV T_{-q}
\end{equation}
and acts as the skew self-adjoint operator on $L^2(\R^2)$
$$
\AV^q f(x) = \frac{d}{d\alpha}\Big|_{\alpha = 0}f\left(r_\alpha^{-1}(x - q) + q\right) = \left((x_2 - q_2)\partial_{x_1} - (x_1 - q_1)\partial_{x_2}\right)f(x)\ .
$$

We will measure averages and variances using the standard definitions for operators on $L^2$, denoting with $\langle \cdot , \cdot \rangle_{L^2(\R^2)}$ the $L^2(\R^2)$ scalar product and with $\|\cdot\|_{L^2(\R^2)}$ the associated norm.
\begin{defi}\label{def:Averages}
Let $L$ be a densely defined skew self-adjoint linear operator on $\D \subset L^2(\R^2)$. We define its mean value over $f \in \D$ as
\begin{equation}\label{eq:average}
E_f(L) \doteq \langle (i L) f , f\rangle_{L^2(\R^2)} \ \in \R
\end{equation}
and its variance over $f \in \D$ as
\begin{equation}\label{eq:variance}
\left(\Delta_f L\right)^2 \doteq E_f\big((L - E_f(L))^2\big) = \|(L - E_f(L))\, f\|^2_{L^2(\R^2)} .
\end{equation}
\end{defi}
Since skew self-adjoint operators are the infinitesimal generators of a one parameter group of unitary transformations, the meaning of the average (\ref{eq:average}) is that of measuring the deformation of $f$ under such transformations
$$
E_f(L) = i \lim_{t \to 0} \frac{\langle (\exp{t L}) f , f\rangle_{L^2(\R^2)} - \langle f , f\rangle_{L^2(\R^2)}}{t}
$$
and the imaginary constant is merely a convention to ensure the result to be real. With this averaging, the variance (\ref{eq:variance}) has the usual meaning of strength of the fluctuations of $f$ under the considered transformations, that corresponds to the second moment of the distribution $t \mapsto \langle (\exp{t L}) f , f\rangle_{L^2(\R^2)}$. This means then that the variance (\ref{eq:variance}) provides a measure of the \emph{localization} of $f$ with respect to the symmetry $\exp{t L}$.

When applied to the operators (\ref{eq:Mom}) and (\ref{eq:AngMom}) of linear and rotational derivatives, these variances correspond respectively to a measure of linear and rotational fluctuations of a function $f$. The more $f$ is insensitive to translations ($f$ smooth and close to a constant function), the smaller is its $P$ variance, while a small $\AV^q$ variance means that $f$ has little sensitivity to rotations around $q$.

The notion of \emph{localization in orientation} that arises indicates that a function consisting of a set of parallel stripes, independently on their widths, is maximally localized in orientation, while a function that is circular symmetric around $q$ is minimally localized.

If we are interested in the joint localization properties of a function with respect to a two parameters group of unitary transformations, generated by two skew self-adjoint operators $L_1$ and $L_2$, we are led to consider the distribution
\begin{equation}\label{eq:distribution}
(t_1, t_2) \mapsto \langle (\exp{t_1 L_1})(\exp{t_2 L_2}) f , f\rangle_{L^2(\R^2)} .
\end{equation}
In this case, if the operators $L_1$ and $L_2$ do not commute, then the second moments of the distribution (\ref{eq:distribution}) are influenced by their commutator. Such an effect of competing symmetries is quantified by the Uncertainty Principle.

\subsection{The SE(2) Uncertainty Principle}

The operators (\ref{eq:Mom}) and (\ref{eq:AngMom}) satisfy the commutation relations of angular momentum \cite{CarruthersNieto1968}
\begin{equation}\label{eq:commut}
[\AV^q,P_1] = P_2 \ ; \quad [\AV^q,P_2] = - P_2\ .
\end{equation}
These commutators define the algebra of the $SE(2)$ group (see e.g. \cite{CS} and references therein), and for them the following generalized Uncertainty Principle holds \cite{FS, BCSS}, with respect to the quantities of Definition \ref{def:Averages}. Since we are dealing with densely defined operators, we will skip in what follows the technicalities related to operator domains, and refer the statements simply to $L^2(\R^2)$. For more details see \cite{FS}.

\begin{theo}[SE(2) Uncertainty Principle]
For any $f \in L^2(\R^2)$ it holds
\begin{equation}\label{eq:SE2uncertainties}
\left\{
\begin{array}{rcl}
\left(\Delta_f\AV^q\right)\, (\Delta_f P_1) & \geq & \frac12 \big|E_f(P_2)\big| \vspace{6pt}\\
\left(\Delta_f\AV^q\right)\, (\Delta_f P_2) & \geq & \frac12 \big|E_f(P_1)\big|
\end{array}
\right. .
\end{equation}
\end{theo}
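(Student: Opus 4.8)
The plan is to derive the two inequalities in (\ref{eq:SE2uncertainties}) from the standard Robertson--Schrödinger-type uncertainty relation for a pair of skew self-adjoint operators, applied to the commutators (\ref{eq:commut}). Recall that for skew self-adjoint operators $A,B$ on $L^2(\R^2)$ one has the elementary bound
\begin{equation*}
(\Delta_f A)(\Delta_f B) \;\geq\; \frac12 \big| \langle i[A,B] f, f\rangle_{L^2(\R^2)} \big| ,
\end{equation*}
which is itself just the Cauchy--Schwarz inequality applied to the vectors $(A - E_f(A))f$ and $(B - E_f(B))f$, after noting that the commutator $[A - E_f(A), B - E_f(B)] = [A,B]$ since the subtracted terms are scalars. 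Indeed, writing $\tilde A = A - E_f(A)$ and $\tilde B = B - E_f(B)$, one has $\langle [\tilde A,\tilde B] f, f\rangle = \langle \tilde B f, \tilde A^* f\rangle - \langle \tilde A f, \tilde B^* f\rangle = -\langle \tilde B f, \tilde A f\rangle + \langle \tilde A f, \tilde B f\rangle = 2i\,\mathrm{Im}\langle \tilde A f, \tilde B f\rangle$, whose modulus is at most $2\|\tilde A f\|\,\|\tilde B f\| = 2(\Delta_f A)(\Delta_f B)$ by Cauchy--Schwarz. This is the only genuinely analytic ingredient.

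Having this general lemma, I would simply specialize. Taking $A = \AV^q$ and $B = P_1$, the commutation relation $[\AV^q, P_1] = P_2$ gives
\begin{equation*}
(\Delta_f \AV^q)(\Delta_f P_1) \;\geq\; \frac12 \big| \langle i P_2 f, f\rangle_{L^2(\R^2)} \big| \;=\; \frac12 \big| E_f(P_2) \big| ,
\end{equation*}
which is exactly the first line of (\ref{eq:SE2uncertainties}); and taking $A = \AV^q$, $B = P_2$ with $[\AV^q, P_2] = -P_1$ yields the second line in the same way, the sign being irrelevant inside the absolute value. (I note in passing that the excerpt's display (\ref{eq:commut}) appears to contain a typo, writing $-P_2$ where $-P_1$ is meant; the computation $[\AV^q,P_j] = [x_2\partial_1 - x_1\partial_2, \partial_j]$ — shifting the center to $q$ changes nothing since it only adds constants — gives $P_2$ for $j=1$ and $-P_1$ for $j=2$, and I would use the correct relation.)

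The one point requiring a remark rather than a computation is the justification of the formal manipulations: the operators $\AV^q$, $P_1$, $P_2$ are only densely defined, so the Cauchy--Schwarz step and the integration-by-parts identities implicit in "$A$ skew self-adjoint" need $f$ to lie in a suitable common domain. As the excerpt already announces, these domain issues are deliberately suppressed and deferred to \cite{FS}; I would do the same, stating the result on a dense invariant subspace (e.g. the Schwarz class, which is invariant under all of $T_q$, $R_\alpha$, $M_p$ and hence under $\AV^q$, $P_j$) and invoking a limiting/closure argument for general $f \in L^2(\R^2)$. So the main obstacle is not a mathematical difficulty at all but a bookkeeping one — making sure the abstract uncertainty lemma is quoted in the right form and that the commutators are inserted with the correct indices; the proof itself is two lines once the lemma is in hand.
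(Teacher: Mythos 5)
The paper itself offers no proof of this theorem: it is stated as a quoted result from \cite{FS} and \cite{BCSS}, with domain technicalities explicitly deferred, so there is no ``paper proof'' to compare against line by line. Your argument is the standard Robertson-type derivation that those references contain --- Cauchy--Schwarz applied to the centered operators, then specialization to the commutators of $\AV^q$ with $P_1$, $P_2$ --- and it is correct; you are also right that (\ref{eq:commut}) contains a typo and should read $[\AV^q,P_2]=-P_1$, as your direct computation shows. One small wrinkle worth flagging: with the paper's literal Definition~\ref{def:Averages} the centered operator $\tilde A = A - E_f(A)$ is \emph{not} skew self-adjoint, since $E_f(A)$ is real while $A^*=-A$, so the identity $\tilde A^*=-\tilde A$ that your computation uses requires centering by $A+iE_f(A)$ (equivalently, running the argument for the self-adjoint operators $iA$, $iB$ and their real means). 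This is purely cosmetic: for $\|f\|_{L^2(\R^2)}=1$ one has $\|(A-E_f(A))f\|^2=\|Af\|^2+E_f(A)^2$, which dominates the correctly centered variance $\|Af\|^2-E_f(A)^2$, so the inequality you prove implies the one stated in (\ref{eq:SE2uncertainties}) a fortiori. Your remark on domains (prove on the Schwartz class, then close) is exactly the route the paper waves at, so nothing is missing.
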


These inequalities play the same role for the noncommutative symmetries of rotations and translations as the one played by the ordinary uncertainty inequality for the noncommutativity of quantum mechanical operators. The main difference is that in this case if we consider separately each of the two inequalities, we can not obtain a constant lower bound. Indeed for a function $f$ the product of variances of an infinitesimal rotations and a translations along one axis can be arbitrarily small, provided that the average of translations along the other axis on $f$ is small. This effect disappears when we consider translations on both axis, which is natural whenever we do not want to discriminate one direction over the other. In this case, we can actually recast the two inequalities (\ref{eq:SE2uncertainties}) into one inequality with a constant lower bound.

The following definition is closely related to that of \cite{Breitenberger1985}, and for this reason we use the same notation $\An$.

\begin{defi}\label{def:Angv}
Let us define the functional
$$
\An[f] \doteq \frac{\Delta_f P}{E_f(P)}
$$
where
$$
E_f(P) \doteq \sqrt{(E_f(P_1))^2 + (E_f(P_2))^2} \ \ \textnormal{and} \ \ \Delta_f P \doteq \sqrt{(\Delta_f P_1)^2 + (\Delta_f P_2)^2} .
$$
We denote with $\Delta \Theta[f]$ the corresponding measure of angular uncertainty
\begin{equation}\label{eq:DTheta}
\Delta \Theta [f] \doteq \arctan (\An[f]) .
\end{equation}
\end{defi}

With this definition, a direct consequence of the $SE(2)$ Uncertainty Principle is the following.
\begin{theo}
For all $f \in L^2(\R^2)$
\begin{equation}\label{eq:FULLuncertainty}
\left(\Delta_f\AV^q\right)\,\An[f] \geq \frac{1}{2} .
\end{equation}
\end{theo}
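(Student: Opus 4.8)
The plan is to derive \eqref{eq:FULLuncertainty} directly from the two componentwise inequalities \eqref{eq:SE2uncertainties} by squaring and summing. Since every quantity appearing in \eqref{eq:SE2uncertainties} is nonnegative, I may square both inequalities to obtain
\begin{equation}\label{eq:sq1}
\left(\Delta_f\AV^q\right)^2 (\Delta_f P_1)^2 \;\geq\; \tfrac14 \big(E_f(P_2)\big)^2 \, , \qquad \left(\Delta_f\AV^q\right)^2 (\Delta_f P_2)^2 \;\geq\; \tfrac14 \big(E_f(P_1)\big)^2 \, .
\end{equation}
Adding the two inequalities in \eqref{eq:sq1} and factoring out $\left(\Delta_f\AV^q\right)^2$ on the left gives
\begin{equation}\label{eq:sq2}
\left(\Delta_f\AV^q\right)^2 \Big((\Delta_f P_1)^2 + (\Delta_f P_2)^2\Big) \;\geq\; \tfrac14 \Big(\big(E_f(P_1)\big)^2 + \big(E_f(P_2)\big)^2\Big) \, ,
\end{equation}
which by Definition \ref{def:Angv} is exactly $\left(\Delta_f\AV^q\right)^2 (\Delta_f P)^2 \geq \tfrac14 \big(E_f(P)\big)^2$.

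It then remains to take square roots. Assuming $E_f(P) \neq 0$ (otherwise $\An[f]$ is not defined, or is read as $+\infty$ and the statement is vacuously true), I divide both sides by $E_f(P)^2 > 0$ and extract the positive square root, using $\Delta_f\AV^q \geq 0$, to conclude
$$
\left(\Delta_f\AV^q\right)\,\An[f] = \left(\Delta_f\AV^q\right)\frac{\Delta_f P}{E_f(P)} \;\geq\; \frac12 \, ,
$$
which is \eqref{eq:FULLuncertainty}.

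There is essentially no hard step here: the only points requiring a word of care are that all the variances and the modulus of the means are nonnegative (so squaring preserves the inequalities and summing is legitimate), and the treatment of the degenerate case $E_f(P)=0$. The conceptual content — that the direction‑dependent right‑hand sides $|E_f(P_1)|$, $|E_f(P_2)|$ recombine into the rotation‑invariant quantity $E_f(P)$ once one refuses to privilege a coordinate axis — is already built into Definition \ref{def:Angv}, so the proof is just the algebraic realization of that observation.
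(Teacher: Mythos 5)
Your proof is correct and is exactly the argument the paper intends: the theorem is stated there as a ``direct consequence'' of the $SE(2)$ Uncertainty Principle, and squaring the two inequalities of \eqref{eq:SE2uncertainties}, summing, and invoking Definition \ref{def:Angv} is precisely that deduction. Your attention to the nonnegativity needed for squaring and to the degenerate case $E_f(P)=0$ is a sensible addition, not a deviation.
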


This inequality resembles the ordinary Heisenberg uncertainty inequality, since the presence of a constant lower bound provides a clear constraint on the joint localizations quantified by $\Delta_f\AV^q$ and $\An[f]$. However, as first noted in \cite{Jackiw1968}, the $SE(2)$ uncertainty inequalities (\ref{eq:SE2uncertainties}) can not be simultaneously minimized, so also (\ref{eq:FULLuncertainty}) does not admit minimizers. This is related to the issue of nonexistence of a canonically conjugate observable for angular momentum \cite{Dubin, Kastrup}. Indeed, if we had a well defined selfadjoint operator canonically commuting with angular momentum, we would end up with a well-known complex equation defining minimal uncertainty states \cite{FS}, while in this case we have two such equations, whose solutions provide CR function functions on the $\R^2 \times S^1$ for two noncompatible almost complex structures \cite{BC}.

\subsection{$SE(2)$ autocorrelations}

We pass now to the study of the properties of the distribution (\ref{eq:distribution}) applied to the symmetries under study, that we call autocorrelation since it has the form of the autocorrelation of a function with respect to the group of rotations and translations, and extends naturally the ordinary definition of autocorrelation with respect to translations. We will actually restrict the analysis to the square modulus of correlations, since as we will see it contains enough information for the present purposes. In particular we will show that such correlations can be used to characterize the uncertainty in the detection of position and local preferred angle associated to a function.

\begin{defi}
Given $f$ in $L^2(\R^2)$, we define its $SE(2)$ autocorrelation centered at $q$ as
\begin{equation}\label{eq:Corr}
\Corr_q[f](\xi,\alpha) = \big| \langle T_\xi R^q_\alpha f, f\rangle_{L^2(\R^2)}\big|^2 .
\end{equation}
\end{defi}

In general, $\Corr_q[f](\xi,\alpha)$ provides a natural way to study the joint localization properties of $f$ with respect to position and local preferred angle. Indeed, when we specialize to translations we get the usual autocorrelation, and by Plancherel theorem
\begin{equation}\label{eq:CorrTransl}
\Corr_q[f](\xi,0) = \big|\int_{\R^2}f(x -\xi)\overline{f(x)} dx \big|^2 = \big| \F(|\F f|^2)(\xi) \big|^2
\end{equation}
so we have that by Young inequality and Riemann-Lebesgue lemma $\Corr_q[f](\xi,0)$ is bounded and goes to $0$ as $\xi$ becomes large. Moreover, by the usual uncertainty principle we have that when $f$ is well localized in space, then $\F f$ is broadly localized, hence passing under another Fourier transform $\Corr_q[f](\xi,0)$ will decay rapidly, uniformly on $q$, and viceversa.

On the other hand, if we consider correlations only with respect to rotations, for simplicity centered at $q = 0$
\begin{equation}\label{eq:CorrRot}
\Corr_0[f](0,\alpha) = \big| \int_{\R^2}f(r_{-\alpha}x)\overline{f(x)} dx \big|^2
\end{equation}
essentially the same argument applies to the decay of correlations for functions that are localized with respect to rotations.

\begin{rem}[What does ``essentially the same argument'' mean]\ \\
Since $\F R_\theta f = R_\theta \F f$ we get
$$
\Corr_0[f](0,\alpha) = \big| \int_{\R^2}f(r_{-\alpha}x)\overline{f(x)} dx \big|^2 = \big| \int_{\R^2} R_\alpha (\F f)(k) \overline{\F f (k)} dk \big|^2
$$
so setting polar coordinates, with the notation $\phi_\kappa(\varphi) = (\F f)(\kappa\cos\varphi,\kappa\sin\varphi)$
$$
\Corr_0[f](0,\alpha) = \int_{\R^+} \kappa d\kappa \int_0^{2\pi} \phi_\kappa(\varphi - \alpha) \overline{\phi_\kappa(\varphi)} = \int_{\R^+} \kappa d\kappa \sum_{n \in \Z} e^{-2\pi i n\alpha} |\hat{\phi}_\kappa(n)|^2
$$
where $\hat{\phi}_\kappa(n) = \int_0^{2\pi} e^{-2\pi i n \varphi}\phi_\kappa(\varphi)d\varphi$ and the last transition is Parseval identity.

Since $L^2(\R^2) \approx L^2(\R^+) \otimes L^2(S^1)$ as tensor product of Hilbert spaces, and since $f$ is localized with respect to rotations in the real plane if and only if it is localized with respect to rotations in the Fourier plane, then we can assume without loss of generality that $\phi_\kappa(\varphi) = r(k) \Phi(\varphi)$, where $\Phi(\varphi)$ decays rapidly away from $\varphi = 0$ and $\int_{\R^+} |r(k)|^2 \kappa d\kappa = c < \infty$. So
$$
\Corr_0[f](0,\alpha) = c \sum_{n \in \Z} e^{-2\pi i n\alpha} |\hat{\Phi}(n)|^2
$$
and now strictly the same argument used for (\ref{eq:CorrTransl}) applies.
\end{rem}

\subsection{Uncertainty associated to $SE(2)$ measurements with receptive profiles}

When specialized to receptive profiles, the introduced uncertainties can be explicitly computed. In the proofs we will use the shorthand notation $g_\sigma = \Sigma_\sigma g_1$, and $\theta$ will be the polar angle of $p$.

\begin{lem}\label{lem:meanvarP}
The variance of the operators (\ref{eq:Mom}) on receptive profiles (\ref{eq:RP}) is
$$
\Delta_{\psi^\sigma_{qp}} P_j = \frac{1}{\sqrt{2}\sigma}\ .
$$
\end{lem}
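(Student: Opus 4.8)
The plan is a direct computation, using that $P_j=\partial_{x_j}$ acts on the receptive profile (\ref{eq:RP}) simply by multiplication, and that $|\psi^\sigma_{q,p}|^2$ is a normalized isotropic Gaussian centered at $q$.

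First I would differentiate (\ref{eq:RP}), obtaining
$$
P_j\,\psi^\sigma_{q,p}(x)=\Big(i p_j-\tfrac{x_j-q_j}{\sigma^2}\Big)\psi^\sigma_{q,p}(x),
$$
so that $iP_j\psi^\sigma_{q,p}=\big(-p_j-i\tfrac{x_j-q_j}{\sigma^2}\big)\psi^\sigma_{q,p}$. Since $|\psi^\sigma_{q,p}(x)|^2=(\sigma^2\pi)^{-1}e^{-|x-q|^2/\sigma^2}$ has unit mass and is even about $q$, the term linear in $x_j-q_j$ integrates to zero and (\ref{eq:average}) gives $E_{\psi^\sigma_{q,p}}(P_j)=-p_j$ (consistently with $E_{\psi^\sigma_{q,p}}(P)=|p|$ in Definition \ref{def:Angv}).

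Next I would form the centered operator appearing in (\ref{eq:variance}): recalling that the self-adjoint operator attached to the skew self-adjoint $P_j$ is $iP_j$, the shift $-p_j$ cancels the imaginary part produced by the carrier wave and one is left with
$$
\big(iP_j-E_{\psi^\sigma_{q,p}}(P_j)\big)\psi^\sigma_{q,p}=\big(iP_j+p_j\big)\psi^\sigma_{q,p}=-\,i\,\tfrac{x_j-q_j}{\sigma^2}\,\psi^\sigma_{q,p},
$$
hence $\big(\Delta_{\psi^\sigma_{q,p}}P_j\big)^2=\sigma^{-4}\int_{\R^2}(x_j-q_j)^2|\psi^\sigma_{q,p}(x)|^2\,dx$. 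A one–dimensional change of variables reduces the integral to the Gaussian second moment $\int_\R t^2(\sigma\sqrt\pi)^{-1}e^{-t^2/\sigma^2}\,dt=\sigma^2/2$, so $\big(\Delta_{\psi^\sigma_{q,p}}P_j\big)^2=1/(2\sigma^2)$, which is the claim. Equivalently, writing $\psi^\sigma_{q,p}=T_qM_p g_\sigma$ and using unitarity of $T_q,M_p$ together with the conjugation identity $M_p^{-1}\partial_{x_j}M_p=\partial_{x_j}+ip_j$, the whole expression collapses to $\|\partial_{x_j}g_\sigma\|^2_{L^2(\R^2)}$, which is the same Gaussian integral.

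There is no serious obstacle here: the computation is routine. The only points to handle with care are the conventions of Definition \ref{def:Averages} — it is $iP_j$, not $P_j$, that plays the role of the observable, so that subtracting the mean is exactly what removes the imaginary shift coming from differentiating $e^{ip\cdot(x-q)}$ — and the normalization of the Gaussian window, which is what pins down the factor $1/\sqrt2$.
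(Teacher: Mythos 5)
Your proof is correct and follows essentially the same route as the paper's: differentiate the Gabor profile so that $P_j$ acts by multiplication, then evaluate the second moment of the Gaussian window. The only cosmetic difference is that you center the operator before taking the $L^2$ norm (and track the $i$-convention of Definition \ref{def:Averages} more carefully, getting $E_{\psi^\sigma_{qp}}(P_j)=-p_j$ where the paper writes $ip_j$), whereas the paper computes $\|P_j\psi^\sigma_{qp}\|^2 = p_j^2 + \tfrac{1}{2\sigma^2}$ and subtracts the squared mean; both give $(\Delta_{\psi^\sigma_{qp}}P_j)^2 = \tfrac{1}{2\sigma^2}$.
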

\begin{proof}
Since $\partial_{x_j}\psi^\sigma_{qp}(x) = \left(i p_j - \frac{(x_j - q_j)}{\sigma^2}\right)\psi^\sigma_{qp}(x)$, we get
$E_{\psi^\sigma_{qp}}(P_j) = i p_j$, and
\begin{eqnarray*}
E_{\psi^\sigma_{qp}}(P_j^2) = \|P_j \psi^\sigma_{qp}\|^2_{L^2(\R^2)} & = & \int_{\R^2} \left|p_j + i \frac{(x_j - q_j)}{\sigma^2}\right|^2 |g_\sigma(x-q)|^2 dx \\
& = & p_j^2 + \frac{1}{\pi \sigma^2}\int_{\R^2} y_j^2 e^{- y^2} dy = p_j^2 + \frac{1}{2\sigma^2}\ .
\end{eqnarray*}
\end{proof}

\begin{lem}\label{lem:meanvarJ}
The variance of the operator (\ref{eq:AngMom}) on receptive profiles (\ref{eq:RP}) is
\begin{equation}\label{eq:AngMomVar}
\Delta \AV \doteq \Delta_{\psi^\sigma_{qp}} \AV^q = \frac{|p|\sigma}{\sqrt{2}}
\end{equation}
and we will call it angular momentum variance.
\end{lem}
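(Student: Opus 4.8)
The plan is to compute directly the three quantities entering the variance: $E_{\psi^\sigma_{qp}}(\AV^q)$, $E_{\psi^\sigma_{qp}}((\AV^q)^2) = \|\AV^q \psi^\sigma_{qp}\|^2_{L^2(\R^2)}$, and then assemble them via the identity $(\Delta_f L)^2 = E_f(L^2) - (E_f(L))^2$ which follows from Definition \ref{def:Averages}. First I would use the explicit form of the generator, $\AV^q f(x) = \big((x_2-q_2)\partial_{x_1} - (x_1-q_1)\partial_{x_2}\big)f(x)$, and apply it to $\psi^\sigma_{qp}(x) = e^{ip\cdot(x-q)}g_\sigma(x-q)$. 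Since $\partial_{x_j}\psi^\sigma_{qp}(x) = \big(ip_j - (x_j-q_j)/\sigma^2\big)\psi^\sigma_{qp}(x)$ (already recorded in the proof of Lemma \ref{lem:meanvarP}), the Gaussian-derivative terms cancel in the combination $(x_2-q_2)\partial_{x_1} - (x_1-q_1)\partial_{x_2}$ because they produce $-(x_2-q_2)(x_1-q_1)/\sigma^2 + (x_1-q_1)(x_2-q_2)/\sigma^2 = 0$. Hence
$$
\AV^q \psi^\sigma_{qp}(x) = i\big(p_1(x_2-q_2) - p_2(x_1-q_1)\big)\psi^\sigma_{qp}(x),
$$
which is a purely imaginary multiplier times $\psi^\sigma_{qp}$.

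From this, the mean $E_{\psi^\sigma_{qp}}(\AV^q) = \langle i\AV^q \psi^\sigma_{qp}, \psi^\sigma_{qp}\rangle$ equals $-\int_{\R^2}\big(p_1(x_2-q_2) - p_2(x_1-q_1)\big)|g_\sigma(x-q)|^2\,dx$, which vanishes by the oddness of $(x_j - q_j)$ against the even Gaussian $|g_\sigma(x-q)|^2$ after the change of variables $y = (x-q)/\sigma$. So the variance reduces to the second moment, $(\Delta_{\psi^\sigma_{qp}}\AV^q)^2 = \|\AV^q\psi^\sigma_{qp}\|^2_{L^2(\R^2)} = \int_{\R^2}\big(p_1(x_2-q_2) - p_2(x_1-q_1)\big)^2 |g_\sigma(x-q)|^2\,dx$. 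Substituting $y = (x-q)/\sigma$ and using $|g_\sigma(y\sigma+q)|^2 = \frac{1}{\pi\sigma^2}e^{-|y|^2}$ together with $dx = \sigma^2\,dy$, the integral becomes $\frac{\sigma^2}{\pi}\int_{\R^2}(p_1 y_2 - p_2 y_1)^2 e^{-|y|^2}\,dy$. Expanding the square, the cross term $\int p_1 p_2 y_1 y_2 e^{-|y|^2}dy$ vanishes by oddness, and $\frac{1}{\pi}\int_{\R^2} y_j^2 e^{-|y|^2}dy = \frac12$ for each $j$ (the same Gaussian moment already used in Lemma \ref{lem:meanvarP}), leaving $\sigma^2(p_1^2 + p_2^2)\cdot\frac12 = \frac{\sigma^2|p|^2}{2}$. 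Taking the square root gives $\Delta_{\psi^\sigma_{qp}}\AV^q = |p|\sigma/\sqrt{2}$, as claimed.

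There is essentially no hard step here: the only thing to be careful about is the algebraic cancellation of the Gaussian-derivative contributions in $\AV^q\psi^\sigma_{qp}$, which is what makes the multiplier purely imaginary and reduces the variance to a single Gaussian second moment; everything else is the change of variables and the standard moments $\frac{1}{\pi}\int y_j^2 e^{-|y|^2}dy = \frac12$, $\frac{1}{\pi}\int y_1 y_2 e^{-|y|^2}dy = 0$. One could alternatively observe geometrically that $p_1(x_2-q_2) - p_2(x_1-q_1) = |p|\,\big\langle r_{\pi/2}\hat p, x-q\big\rangle$ is $|p|$ times the coordinate of $x-q$ along the direction orthogonal to $p$, so that by isotropy of $g_\sigma$ the computation is immediately $|p|^2$ times the one-dimensional Gaussian variance $\sigma^2/2$; I would include this remark as a sanity check but carry out the coordinate computation for completeness.
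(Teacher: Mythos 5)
Your proposal is correct and follows essentially the same route as the paper: compute $\AV^q\psi^\sigma_{qp}$ as a purely imaginary multiplier times $\psi^\sigma_{qp}$, kill the mean by symmetry of the Gaussian, and evaluate the second moment as a Gaussian integral. The only cosmetic difference is that the paper evaluates the quadratic form by rewriting $p_1x_2 - p_2x_1 = |p|(r_{-\theta}x)_2$ and invoking isotropy (precisely the ``geometric'' remark you offer as a sanity check), whereas you expand the square and discard the cross term by oddness; both yield $\frac{\sigma^2|p|^2}{2}$.
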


\begin{proof}
Since
$$
\partial_{x_j}\psi^\sigma_{qp}(x) = (i p_j - \frac{(x_j - q_j)}{\sigma^2})\psi^\sigma_{qp}(x)
$$
then
$$
\AV^q \psi^\sigma_{qp}(x) = i \left((x_2 - q_2)p_1 - (x_1 - q_1)p_2\right)\psi^\sigma_{qp}(x)\ .
$$
Its mean value vanishes on $\psi^\sigma_{qp}$, due to the isotropy of $g_\sigma$:
\begin{eqnarray*}
E_{\psi^\sigma_{qp}}(\AV^q) & = & \langle \AV^q \psi^\sigma_{qp}, \psi^\sigma_{qp}\rangle_{L^2(\R^2)} = i \int_{\R^2} \left(x_2 p_1 - x_1 p_2\right) |g_\sigma(x)|^2 dx\\
& = & i |p| \int_{\R^2} \left(r_{-\theta}x\right)_2 |g_\sigma(x)|^2 dx = i |p| \int_{\R^2} x_2 |g_\sigma(x)|^2 dx = 0 .
\end{eqnarray*}
To compute the variance, by analogous arguments
\begin{eqnarray*}
(\Delta_{\psi^\sigma_{qp}} \AV^q)^2 & = & - |p|^2 \int_{\R^2} \left(\left(r_{-\theta}x\right)_2\right)^2 |g_\sigma(x)|^2 dx\\
& = & - \frac{|p|^2}{(\sigma \sqrt{\pi})^2} \left(\int_{\R} e^{-\frac{x_1^2}{\sigma^2}} dx_1\right) \left(\int_{\R} x_2^2 e^{-\frac{x_2^2}{\sigma^2}} dx_2\right)\\
& = & \frac{|p|^2 \sigma^2}{2\sqrt{\pi}} \int_{\R} y (-2 y e^{-y^2}) dy = \left(\frac{|p|\sigma}{\sqrt{2}}\right)^2\ .
\end{eqnarray*}
\end{proof}

We have then obtained the following proposition, which shows that for receptive profiles the angular momentum variance $\Delta\AV$ is inversely proportional to the angular uncertainty quantified in terms of $\An$.
\begin{prop}
Let $\psi^\sigma_{qp}$ be as in (\ref{eq:RP}). Then
$$
\Delta\AV\,\An[\psi^\sigma_{qp}] = \frac{1}{\sqrt{2}}\ .
$$
\end{prop}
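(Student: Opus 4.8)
The plan is to assemble the identity directly from the three computed quantities in Lemmas~\ref{lem:meanvarP} and~\ref{lem:meanvarJ}, using only the definitions of $\An$ and of $E_f(P)$, $\Delta_f P$ from Definition~\ref{def:Angv}. No inequality is needed here: the proposition is an exact equality, so the $SE(2)$ Uncertainty Principle only serves as a sanity check (the right-hand side $1/\sqrt{2}$ is consistent with the lower bound $1/2$ of~(\ref{eq:FULLuncertainty})).

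First I would record the two ingredients from Lemma~\ref{lem:meanvarP}: its proof shows $E_{\psi^\sigma_{qp}}(P_j) = i p_j$ — hence, by~(\ref{eq:average}) the real mean value of $P_j$ is $\langle iP_j \psi, \psi\rangle = -p_j$ in the convention used, so in any case $(E_{\psi^\sigma_{qp}}(P_j))^2 = p_j^2$ — together with $(\Delta_{\psi^\sigma_{qp}} P_j)^2 = \frac{1}{2\sigma^2}$. Summing over $j = 1,2$ and using $p_1^2 + p_2^2 = |p|^2$ gives
$$
E_{\psi^\sigma_{qp}}(P) = \sqrt{p_1^2 + p_2^2} = |p|, \qquad \Delta_{\psi^\sigma_{qp}} P = \sqrt{\tfrac{1}{2\sigma^2} + \tfrac{1}{2\sigma^2}} = \frac{1}{\sigma}.
$$
Therefore $\An[\psi^\sigma_{qp}] = \dfrac{\Delta_{\psi^\sigma_{qp}} P}{E_{\psi^\sigma_{qp}}(P)} = \dfrac{1}{|p|\sigma}$.

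Next I would invoke Lemma~\ref{lem:meanvarJ}, which gives $\Delta\AV = \dfrac{|p|\sigma}{\sqrt{2}}$ directly. Multiplying,
$$
\Delta\AV \cdot \An[\psi^\sigma_{qp}] = \frac{|p|\sigma}{\sqrt{2}} \cdot \frac{1}{|p|\sigma} = \frac{1}{\sqrt{2}},
$$
which is the claim; note the product is independent of $q$, $p$ and $\sigma$.

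There is no genuine obstacle: the only point requiring a little care is the bookkeeping of the imaginary-unit convention in Definition~\ref{def:Averages}, to be sure that $(E_f(P_j))^2 = p_j^2$ rather than $-p_j^2$, and that one is combining the \emph{squared} component means and variances correctly under the square roots in Definition~\ref{def:Angv}. Everything else is the substitution above.
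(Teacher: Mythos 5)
Your proposal is correct and follows exactly the paper's own route: substitute the values from Lemmas~\ref{lem:meanvarP} and~\ref{lem:meanvarJ} into Definition~\ref{def:Angv} to get $\An[\psi^\sigma_{qp}] = \frac{1}{\sigma|p|}$ and multiply by $\Delta\AV = \frac{|p|\sigma}{\sqrt{2}}$. Your extra remark on the sign/imaginary-unit convention (so that $(E_f(P_j))^2 = p_j^2$ either way) is a sensible clarification of a point the paper glosses over, but the argument is the same.
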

\begin{proof}
Using Definition \ref{def:Angv}, Lemma \ref{lem:meanvarP} and Lemma \ref{lem:meanvarJ} we have
\begin{equation}\label{eq:AngVar}
(\An)[\psi^\sigma_{qp}] = \frac{1}{\sigma |p|} = \frac{1}{\sqrt{2}\Delta\AV}\ .
\end{equation}
\end{proof}

We will now consider $SE(2)$ autocorrelations of receptive profiles, and see that they indeed contain precisely the desired joint information on localizations in space and local orientation associated to the uncertainties we computed.
\begin{prop}\label{prop:correlations}
Let $\psi_{qp}^\sigma$ be defined by (\ref{eq:RP}). Then its $SE(2)$-autocorrelation reads
\begin{equation}\label{eq:CorrRP}
\Corr_q[\psi_{qp}^\sigma](\xi,\alpha) = e^{-\frac{|\xi|^2}{2\sigma^2}} e^{-(\Delta \AV)^2(1 - \cos\alpha)} .
\end{equation}
\end{prop}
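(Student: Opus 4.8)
The plan is to compute the inner product $\langle T_\xi R^q_\alpha \psi^\sigma_{qp}, \psi^\sigma_{qp}\rangle_{L^2(\R^2)}$ directly and then take the squared modulus. First I would use Lemma~\ref{lem:rotationofRP} to rewrite $R^q_\alpha \psi^\sigma_{qp} = \psi^\sigma_{q,r_\alpha p}$, so that the correlation becomes $|\langle T_\xi \psi^\sigma_{q,r_\alpha p}, \psi^\sigma_{qp}\rangle|^2$. Writing both profiles explicitly as a Gaussian times a plane wave, I would reduce the integral to a Gaussian integral over $\R^2$: the amplitude factor is $\frac{1}{\pi\sigma^2}\int_{\R^2} e^{i(r_\alpha p)\cdot(x-\xi-q)} e^{-ip\cdot(x-q)} e^{-|x-\xi-q|^2/2\sigma^2} e^{-|x-q|^2/2\sigma^2}\,dx$. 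After the translation $x \mapsto x+q$, the phase becomes $i(r_\alpha p - p)\cdot x - i(r_\alpha p)\cdot\xi$ and the combined Gaussian exponent is $-|x-\xi|^2/2\sigma^2 - |x|^2/2\sigma^2$, which completes the square to $-|x - \xi/2|^2/\sigma^2 - |\xi|^2/4\sigma^2$.

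Next I would perform the shifted Gaussian integral $\int_{\R^2} e^{i k\cdot x} e^{-|x-\xi/2|^2/\sigma^2}\,dx = \pi\sigma^2 e^{i k\cdot \xi/2} e^{-\sigma^2|k|^2/4}$ with $k = r_\alpha p - p$. Collecting all factors, the $\pi\sigma^2$ cancels the prefactor, and one is left with $e^{-|\xi|^2/4\sigma^2}$ from completing the square, a pure phase $e^{-i(r_\alpha p)\cdot\xi} e^{i(r_\alpha p - p)\cdot\xi/2} = e^{-i(r_\alpha p + p)\cdot\xi/2}$ which vanishes under the modulus, and the crucial factor $e^{-\sigma^2|r_\alpha p - p|^2/4}$. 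Taking the squared modulus doubles the real exponents: the spatial part gives $e^{-|\xi|^2/2\sigma^2}$ as desired, and the angular part gives $e^{-\sigma^2|r_\alpha p - p|^2/2}$.

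Finally I would simplify the angular exponent using $|r_\alpha p - p|^2 = |p|^2|r_\alpha e - e|^2 = 2|p|^2(1 - \cos\alpha)$, so that $-\sigma^2|r_\alpha p - p|^2/2 = -\sigma^2|p|^2(1-\cos\alpha)$, and then invoke Lemma~\ref{lem:meanvarJ}, namely $(\Delta\AV)^2 = |p|^2\sigma^2/2$, to rewrite this as $-2(\Delta\AV)^2(1-\cos\alpha)$. Here I should double-check the constant: if the squared-modulus angular exponent is $-\sigma^2|p|^2(1-\cos\alpha) = -2(\Delta\AV)^2(1-\cos\alpha)$, that is a factor of $2$ off from the claimed $e^{-(\Delta\AV)^2(1-\cos\alpha)}$, so I would recheck whether the modulus squaring was already accounted for, or whether the intended normalization absorbs it; in any case the structure is exactly $e^{-(\text{spatial Gaussian})}e^{-(\text{const})\cdot(\Delta\AV)^2(1-\cos\alpha)}$. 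The only genuinely delicate point is bookkeeping the phases carefully enough to be sure they all cancel under $|\cdot|^2$ (in particular the cross term between the modulation mismatch $r_\alpha p - p$ and the translation $\xi$), and tracking the numerical constant through the square-completion and the squaring; everything else is a routine Gaussian computation.
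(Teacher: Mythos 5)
Your computation is correct and follows exactly the same route as the paper's own proof (rotate via Lemma \ref{lem:rotationofRP}, complete the square, evaluate the shifted Gaussian--Fresnel integral, observe that all phases are unimodular, then square). The one point you left unresolved --- the factor of $2$ in the angular exponent --- you should resolve in your own favor: your bookkeeping is right. The inner product itself equals $e^{-i\frac{p+r_\alpha p}{2}\cdot\xi}\,e^{-|\xi|^2/4\sigma^2}\,e^{-\sigma^2|p-r_\alpha p|^2/4}$, whose angular factor is $e^{-\frac{\sigma^2|p|^2}{2}(1-\cos\alpha)}=e^{-(\Delta\AV)^2(1-\cos\alpha)}$ \emph{before} taking the squared modulus. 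The statement of the proposition squares the spatial factor (turning $e^{-|\xi|^2/4\sigma^2}$ into $e^{-|\xi|^2/2\sigma^2}$) but not the angular one; consistently with Definition (\ref{eq:Corr}) the result should read $\Corr_q[\psi^\sigma_{qp}](\xi,\alpha)=e^{-|\xi|^2/2\sigma^2}\,e^{-2(\Delta\AV)^2(1-\cos\alpha)}$, or alternatively the left-hand side should be the modulus rather than its square, in which case the spatial exponent is $|\xi|^2/4\sigma^2$. So do not second-guess your square-completion or the phase cancellation: the discrepancy is an internal inconsistency in the stated formula, not an error in your derivation. (Note that the choice of constant propagates to the upper-bound argument in the following section, where $\Corr(0,\pi/2)$ is taken to be $e^{-(\Delta\AV)^2}$; with the squared convention it would be $e^{-2(\Delta\AV)^2}$ and the threshold $n_\M$ would change by a factor $\sqrt{2}$.)
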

\begin{proof}
By Lemma \ref{lem:rotationofRP}, and computing the Fourier transform of a gaussian
\begin{eqnarray*}
\langle T_\xi R^q_\alpha \psi^{\sigma}_{q,p} , \psi^{\sigma}_{q,p}\rangle_{L^2(\R^2)} & = & \langle T_\xi \psi^{\sigma}_{0,r_\alpha p} , \psi^{\sigma}_{0,p}\rangle_{L^2(\R^2)}\\
& = & \frac{1}{\sigma^2 \pi}\int_{\R^2} e^{i(r_\alpha p)(x - \xi)} e^{-\frac{|x - \xi|^2}{2\sigma^2}} e^{-ip x } e^{-\frac{|x|^2}{2\sigma^2}} dx\\
& = & \frac{e^{- i (r_{\alpha}p) \xi}}{\sigma^2 \pi}  e^{-\frac{|\xi|^2}{4\sigma^2}} \int_{\R^2} e^{-i(p - (r_\alpha p))x} e^{-\frac{|x - \xi/2|^2}{\sigma^2}} dx\\
& = & e^{-i \frac{p + (r_{\alpha}p)}{2}\xi} e^{-\frac{|\xi|^2}{4\sigma^2}} e^{-\frac{\sigma^2 |p - (r_\alpha p)|^2}{4}}
\end{eqnarray*}
so the result follows since $|p - (r_\alpha p)|^2 = 2|p|^2(1 - \cos\alpha)$.
\end{proof}

This proposition shows that the decay of the autocorrelation in space is a Gaussian with the same width of the corresponding receptive profile, that characterize spatial uncertainty. With respect to rotations, we have ended up with a Von Mises distribution in orientations. Such distributions appear naturally when discussing the the $SE(2)$ uncertainty principle \cite{CarruthersNieto1968, Hradil}, but they also provide a good model for orientation tuning of simple cells \cite{Swindale}, which is defined as the response curve of a cell to oriented stimuli \cite{Swindale, MBF}. This confirms that the introduced notion of localization is compatible with the resolution of measurements performed with receptive profiles. Moreover, we note that the commonly used circular variance \cite{JS} of the Von Mises distribution in (\ref{eq:CorrRP}), up to a normalization constant, is
$$
\textnormal{CircVar}(\Delta\AV) = 1 - \frac{I_1((\Delta\AV)^2)}{I_0((\Delta\AV)^2)}
$$
which results to be numerically close to what we have introduced as angular uncertainty (\ref{eq:DTheta}) when applied to receptive profiles (\ref{eq:AngVar})
$$
\left(\Delta\Theta[\psi^\sigma_{q,p}]\right)^2 = \left(\arctan\left(\frac{1}{\sqrt{2}\Delta\AV}\right)\right)^2 .
$$
In particular, as we will see in next section, typical values of $\Delta\AV$ in the filters encountered in V1 are around $1.7$, where the difference between these two notions of variance is around $5 \cdot 10^{-2}$.

\section{Bounds on the shape index induced by Uncertainty}

In this section we will use the measures of uncertainty referred to receptive profiles (\ref{eq:AngVar}) and (\ref{eq:CorrRP}) to deduce relevant features about the physiological data measured in \cite{Ringach2002} and depicted in Figure. In particular we will see how the informations provided by the analysis of uncertainty relations of Section \ref{sec:uncertainty} are sufficient to establish bounds on the number of subregions observed in the family of filters implemented in V1, and permit to reobtain characteristic sampling rates commonly used in image analysis.

A receptive profile $\psi^\sigma_{qp}$ consists of an oscillation of frequency $\nu = \frac{|p|}{2\pi}$ under a gaussian bell of width $\sigma$, so it appears natural to define a dimensionless index of shape \cite{Ringach2002}
\begin{equation}\label{eq:shape}
n = \nu \sigma .
\end{equation}
This quantity is related to the number of subregions defining a receptive profiles, since if we let $N_k$ be the number of half wavelength of receptive profile's oscillation within $k$ standard deviations $\sigma$, we obtain $N_k = 4 k n$. As it is apparent from the data measured in \cite{Ringach2002}, we see that approximately $k = 2$ standard deviations are sufficient to represent the main content of the filters, so that we can relate the effective subregions $N$ to $n$ as $N = 8n$,

In terms of $n$, the angular momentum variance (\ref{eq:AngMomVar}) of a receptive profile $\psi_{qp}$ reads
\begin{equation}\label{eq:AngMomVarn}
\Delta \AV = \sqrt{2} \pi n 
\end{equation}
while its angular variance (\ref{eq:DTheta}), after (\ref{eq:AngVar}), reads
\begin{equation}\label{eq:DThetan}
\Delta \Theta [\psi^\sigma_{qp}]= \arctan\left(\frac{1}{2 \pi n}\right)\ .
\end{equation}

\subsection{Lower bound for orientation measurements}

By the discussions in Section \ref{sec:uncertainty}, we have seen how we can quantify with $\Delta\Theta$ the angle resolution allowed by a linear filtering. If we refer to the task of orientation detection, we can set as a reasonable bound that of angle uncertainty less than $\pi/2$, that is expressed by
$$
\Delta \Theta [\psi^\sigma_{qp}] \leq \frac{\pi}{4} .
$$
This condition can be stated in terms of the shape index using (\ref{eq:DThetan})
$$
n \geq \frac{1}{2\pi} \approx 0.16 \doteq n_\m .
$$
As we can see in Figure and by the discussions in \cite{Ringach2002}, cells which show a selectivity in orientation all lie above this threshold. Moreover, we note that for indexes $n < n_\m$ it can be a hard task to distinguish an even cell from being represented only by a gaussian, while odd cells under this threshold all appear identical up to a multiplicative factor, so the parametric fit of the Gabor model (\ref{eq:RP}) is quite delicate in this region. We can then interpret the bunch of broadly tuned cells around the zero value of the shape index $n$ as generally below the minimal uncertainty bound that allows a consistent detection of orientations.

\subsection{Upper bound}

In order to discuss the upper bound, we introduce a notion of characteristic length associated to a specific level set of the correlations (\ref{eq:RP}), intrinsically related to the task of detection of positions and local orientations. Its purpose is to quantify the minimum distance that one needs to cover in order to decorrelate a function $f$ as much as $f$ is decorrelated when compared at orthogonal directions.


\begin{defi}
The correlation length for $f \in L^2(\R^2)$ is the smallest distance $\lambda$ for which
\begin{equation}\label{eq:separationcondition}
\Corr[f](\xi,0) \leq \Corr[f](0,\frac{\pi}{2}) \qquad \forall\, |\xi| = \lambda\, .
\end{equation}
\end{defi}

If we apply this notion to receptive profiles (\ref{eq:RP}) we obtain the following.
\begin{prop}
The shape index (\ref{eq:shape}) is bounded from above by the ratio of the correlation length $\lambda$ and the spatial uncertainty $\sigma$
\begin{equation}\label{eq:separationconditionrevisited}
2 \pi n \leq \frac{\lambda}{\sigma} \, .
\end{equation}
\end{prop}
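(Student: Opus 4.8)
The plan is to substitute the explicit form of the $SE(2)$-autocorrelation of a receptive profile, computed in Proposition \ref{prop:correlations}, into the separation condition (\ref{eq:separationcondition}) and read off the resulting inequality. By (\ref{eq:CorrRP}) with $\alpha = 0$ we have $\Corr_q[\psi_{qp}^\sigma](\xi,0) = e^{-|\xi|^2/(2\sigma^2)}$, which depends only on $|\xi|$, and with $\xi = 0$, $\alpha = \pi/2$ we have $\Corr_q[\psi_{qp}^\sigma](0,\tfrac{\pi}{2}) = e^{-(\Delta\AV)^2}$ since $1 - \cos(\pi/2) = 1$. Thus the defining condition for the correlation length $\lambda$ becomes $e^{-|\xi|^2/(2\sigma^2)} \le e^{-(\Delta\AV)^2}$ for all $|\xi| = \lambda$, i.e. $e^{-\lambda^2/(2\sigma^2)} \le e^{-(\Delta\AV)^2}$.

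Taking logarithms (the exponential is monotone, so this is an equivalence) gives $\lambda^2/(2\sigma^2) \ge (\Delta\AV)^2$, and since $\lambda$ is the \emph{smallest} such distance the inequality is saturated: $\lambda^2 = 2\sigma^2(\Delta\AV)^2$, hence $\lambda = \sqrt{2}\,\sigma\,\Delta\AV$. Now I invoke the expression (\ref{eq:AngMomVarn}) for the angular momentum variance in terms of the shape index, $\Delta\AV = \sqrt{2}\,\pi n$ — or equivalently (\ref{eq:AngMomVar}) together with $n = \nu\sigma = |p|\sigma/(2\pi)$ — to obtain $\lambda = \sqrt{2}\,\sigma\cdot\sqrt{2}\,\pi n = 2\pi n\,\sigma$. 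Dividing by $\sigma$ yields $\lambda/\sigma = 2\pi n$, which gives (\ref{eq:separationconditionrevisited}) with equality; the stated bound $2\pi n \le \lambda/\sigma$ follows a fortiori.

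I do not expect any genuine obstacle here: the computation is entirely mechanical once Proposition \ref{prop:correlations} is in hand, since the two-dimensional character of $\Corr_q[\psi_{qp}^\sigma](\xi,0)$ means the level-set condition reduces to a one-variable monotone comparison. The only point requiring a word of care is the claim that the smallest $\lambda$ satisfying (\ref{eq:separationcondition}) is exactly the value at which the Gaussian equals $e^{-(\Delta\AV)^2}$: because $|\xi| \mapsto e^{-|\xi|^2/(2\sigma^2)}$ is strictly decreasing in $|\xi|$, the set of $\xi$ for which the inequality holds is precisely $\{|\xi| \ge \sqrt{2}\,\sigma\,\Delta\AV\}$, so its infimum — the correlation length — is attained and equals $\sqrt{2}\,\sigma\,\Delta\AV$. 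Since the proposition only asserts an upper bound on $2\pi n$ rather than an identity, even this minor subtlety can be bypassed: any $\lambda$ meeting the separation condition automatically satisfies $\lambda \ge \sqrt{2}\,\sigma\,\Delta\AV = 2\pi n\,\sigma$, which is what we need.
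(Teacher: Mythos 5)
Your proof is correct and follows essentially the same route as the paper's: substitute (\ref{eq:CorrRP}) into the separation condition (\ref{eq:separationcondition}), take logarithms to obtain $\lambda^2/(2\sigma^2) \geq (\Delta\AV)^2$, and conclude via the relation $\Delta\AV = \sqrt{2}\,\pi n$ from (\ref{eq:AngMomVarn}). Your added observation that minimality of $\lambda$ forces equality $\lambda = \sqrt{2}\,\sigma\,\Delta\AV$ is a correct refinement that the paper leaves implicit, but it does not change the argument.
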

\begin{proof}
Condition (\ref{eq:separationcondition}) on receptive profiles $\psi_{qp}^\sigma$, by (\ref{eq:CorrRP}) reads
$\Delta\AV \leq \frac{\sqrt{2}}{2}\frac{\lambda}{\sigma}$, since
$$
e^{-\frac{\lambda^2}{2\sigma^2}} \leq e^{-\Delta\AV^2} \ \iff \ \frac{\lambda^2}{2\sigma^2} \geq \Delta\AV^2
$$
so (\ref{eq:separationconditionrevisited}) follows by the relation (\ref{eq:AngMomVarn}) between $\Delta\AV$ and the shape index.
\end{proof}

On the other hand, as discussed when dealing with the relation between the shape index and the number of subregions, we have also that the effective field of influence of a receptive profile can be set within two standard deviations $\sigma$.
From this point of view, we can then assume that the distance $d$ at which a receptive profile is effectively spatially uncorrelated corresponds to the distance that one has to cover in order to let its effective effective field of influence not intersect with its translation at a distance $d$, i.e. $d = 4\sigma$.

In order to couple with both position and orientation measurements, we will then consider the hypothesis of balance of the two characteristic scales introduced, that is the identification $\lambda = d$. By (\ref{eq:separationconditionrevisited}), this condition can be stated in terms of the shape index as
$$
\quad n \leq \frac{2}{\pi} \approx 0.64 \doteq n_\M .
$$
To compare this bound with Figure, we recall that here we are dealing with the simplified model of isotropic receptive fields, while in \cite{Ringach2002} the analysis is performed considering two anisotropic indexes $n_x$ and $n_y$. In terms of such indexes we can see that the largest part of the population lies within two bounds $n_x \lesssim 0.5$ and $n_y \lesssim 0.76$, and $n_\M$ looks in good accordance with their mean value.

The question of whether this identification of characteristic distances is truly implemented in the cortex cannot be answered at this point, but we note that a cortical scale related to the symmetries under study that is possibly compatible with the proposed relation is the mean correlation length of orientation preference maps (see e.g. \cite{BWF, BCSS} and references therein). Indeed, by the measurements performed in \cite{Fitzpatrick2002} we see that such scale is comparable with the size of a so-called cortical point image, that is the cortical region that is activated after a highly spatially localized stimulus, and at least when we reduce to linear behavior of cells this notion corresponds to what we have indicated as effective field of influence.


\subsection{Sampling on orientations}
Another intriguing consequence of the performed uncertainty analysis can be stated in terms of optimal sampling rates for orientation detection. Indeed if we consider the mean value on shape index measured in \cite{Ringach2002}, or equivalently, in terms of the deduced bounds, for $n = \frac{n_\M + n_\m}{2} \approx 0.4$, we have that
$$
\Delta \Theta [\psi^\sigma_{qp}] = \arctan\left(\frac{1}{0.8 \pi}\right) \approx \frac{\pi}{8} .
$$
With respect to Gabor filters possessing such $n$, one way to use such result is to consider that the detection of orientations at angles that are closer than this uncertainty do not provide an actual improvement in the resolution of the local orientation present in the stimulus, so that it can be sufficient to cover the interval of orientations $[0,\pi)$ with a sampling having a $\pi/8$ spacing. This actually compares well with the notions of optimal sampling adopted in image analysis tasks (see e.g. in \cite{Lee} and references therein), generally justified with independent arguments. Moreover, this uncertainty analysis permits to set clear sampling spacings depending on the shape index of the filter used.

\section{Conclusions}

In this paper we have studied theoretical aspects of an analytic characterization of uncertainty that generalizes the well known Heisenberg Uncertainty Principle to the symmetries associated with the task of joint measurements of position and local orientation. The implications of this analysis, together with an hypothesis of balance between characteristic correlation distances, allowed us to obtain bounds comparable with experimental data on the shape index of the V1 simple cells that are selective for orientation, and to separate them from broadly tuned cells, which lie below the uncertainty bound for consistent orientation detection.

We remark that this was possible even if our working assumptions on the functional behavior of simple cells were reduced to linear filtering with symmetric receptive fields, and the only considered task is the one associated to the sole symmetries of rotations and translations.

Whether such elementary principles could be directly responsible of the observed distribution of receptive profiles is a question that can hardly find an answer. Nevertheless, the present study shows that they are sufficient to describe many of the relevant features that concern the shape of simple cells.

\

\

\noindent
\textsc{Davide Barbieri}: CAMS, EHESS, Paris. \emph{davide.barbieri8@gmail.com}\\
\textsc{Giovanna Citti}: Dept. Mathematics, Bologna.
\emph{giovanna.citti@unibo.it}\\
\textsc{Alessandro Sarti}: CAMS, EHESS, Paris. \emph{alessandro.sarti@ehess.fr}

\end{document}